\theoremstyle{plain}                 
\newtheorem{theorem}{Theorem}[section]
\newtheorem{lemma}[theorem]{Lemma}        
\theoremstyle{definition}           
\newtheorem{definition}[theorem]{Definition}    
\theoremstyle{remark}       
\newtheorem{remark}[theorem]{Remark}    
\newcommand{\R}{\mathbb{R}}
\def\bea{\begin{eqnarray}}
\def\eea{\end{eqnarray}}
\def\nn{\nonumber}
\begin{document}

\title{Vertex electrical model: \\lagrangian and non-negative properties}

\author[D.~Talalaev]{D.~Talalaev}
\address{D.~T.: Lomonosov Moscow State University, Moscow, Russia; and Centre of Integrable Systems, P.~G.~Demidov Yaroslavl State University, Sovetskaya 14, 150003, Yaroslavl, Russia}
\email{dtalalaev@yandex.ru}

\begin{abstract}
This note is a modest addition to the work \cite{BGKT}. Here we construct an embedding of the space of electrical networks to the totally non-negative Lagrangian  Grassmannian in a generic situation with the help of the technique of vertex integrable models of statistical mechanics.

\end{abstract}

\maketitle

\tableofcontents

\hspace{0.1in}

\section{Introduction}
The question of embedding the space of electrical networks to a Lagrangian non-negative Grassmannian already has a rather rich history and should be perceived in the context of other archetypal models on weighted graphs. The latter include the dimer model and the Ising model. For the first of them, there is an embedding of the model space into a non-negative Grassmannian \cite{P}, for the second - into a non-negative orthogonal Grassmannian \cite{GP}. The electrical network model is related to the Lagrangian version of the non-negative Grassmannian. This statement has been proven with the help of various techniques in the works \cite{LP}, \cite{L}, \cite{CGS} and \cite{BGKT}.
Here we apply the technique of integrable vertex models of statistical physics to the Grassmannian embedding problem. The actual idea of the vertex representation is to express the observables in a factorized form, each of whose multipliers satisfies local controllable properties. These are usually versions of the Yang-Baxter equation, the local Yang-Baxter equation. The vertex representation of electrical networks involves the solution of the local Yang-Baxter equation encoding Ohm's law. This solution is related to the so-called electrical solution of the Zamolodchikov tetrahedron equation \cite{Zam}. This representation was found in Sergeyev's work \cite{Ser} and led to the modern understanding of electrical manifolds proposed in the work \cite{GT}.

By an electrical network $e$, we mean a connected graph $\Gamma$ without loops with a function of weights on the edges $c:E\to \mathbb{R}^+$ (the conductivity function) and a  subset of boundary vertices   
$V_B=\{1,\ldots,n\} \subset V=\{1,\ldots,N\}.$ Let $c_{ij}$ be the conductivity of an edge connecting vertices $i$-th and $j$-th which is zero if the vertices are not connected.
\begin{definition}
The Kirchhoff matrix of a network is a matrix
\bea
T_{ij}=\left\{ 
\begin{array}{ccc}
-c_{ij} &\mbox{if}& i\ne j\\
\sum_{k\ne i} c_{ik} &\mbox{if}& i=j
\end{array}
\right.\nn
\eea
\end{definition}
The network response matrix $M_R(e)$ relates the currents and potentials at the boundary points of the network   
\bea
I=M_R U,\nn
\eea
where  $I:V_{B}\to  \mathbb{R}$ are the currents and $U:V_{B}\to  \mathbb{R}$ are the potentials in boundary points.    
Let us present the Kirchhoff matrix in a block form where the $n\times n$ block $A$ corresponds to the subset of  boundary points:
\bea
T=\left(
\begin{array}{cc}
A & B\\
B^T & C
\end{array}
\right)\nn
\eea
then the response matrix is given by an expression
\bea
M_R=A-B C^{-1} B^T,\nn
\eea
which is just the Schur complement of $C.$ We denote by $E_n$ the variety of electrical networks with $n$ boundary points. The main statements of this note are the theorem \ref{lagr} where we construct an embedding of the subvariety of electrical networks on standart graphs $\subset E_n$ to the lagrangian grassmanian $LG(n-1)$ and the theorem \ref{nonneg} which demonstrate the total non-negativity of this  embedding in the case of odd $n.$

\section{Vertex model}
\label{vertex}
The set of standard graphs is a subset of the set of critical graphs (see \cite{CIM}, \cite{CIM1}). They are defined inductively. The first few examples of standard graphs are presented on Figure \ref{stgr}. 
\begin{figure}[h!]
\centering
\includegraphics[width=80mm]{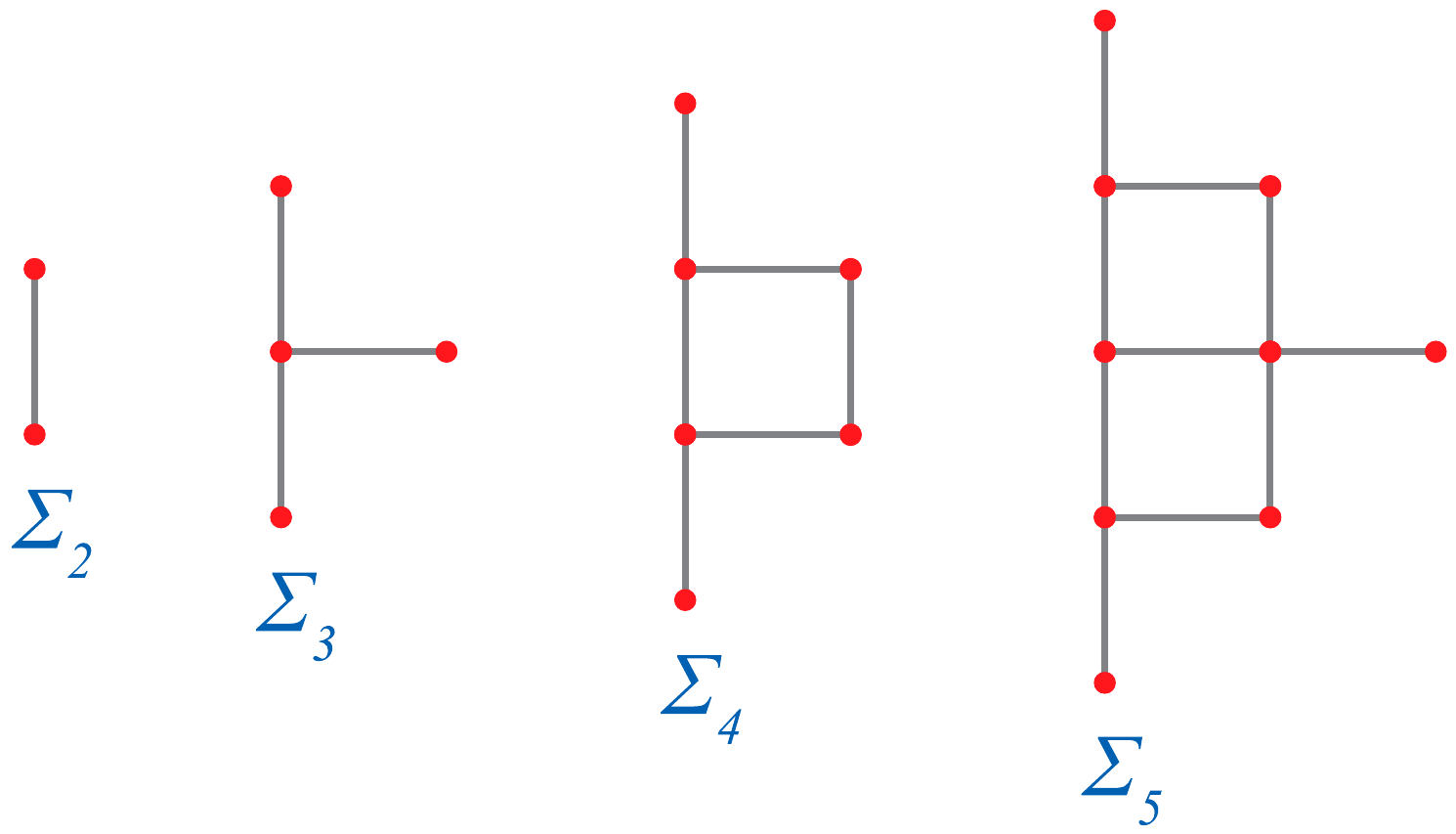}
\caption{Standard graphs}
\label{stgr}
\end{figure} 
In this paper we deal only with standard graphs. In the problem of parametrization of electrical networks, they represent, in a certain sense, a point of generic position.

For $e\in E_n$ on a standard graph the authors of \cite{GT} defined  the boundary partition function $M_B(e)$. It is a matrix which depends on at most $n(n-1)/2$ parameters, these parameters are the conductivities of the edges of the network.  According to \cite{GT} such a matrix  belongs to the symplectic group $Sp(n)$. 

Let us introduce the notations
\bea
T_{2n}=\left(
\begin{array}{ccccccc}
1 & 0 & 0 &\cdots & 0 & 0 &\cdots\\
0 & 0 & 0 & \cdots & 1 & 0 & \cdots \\
0 & 1 & 0 & \cdots & 0 & 0 & \cdots \\
0 & 0 & 0 & \cdots & 0 & 1 & \cdots \\
0 & 0 & 1 & \cdots & 0 & 0 & \cdots \\
\vdots &  \vdots & \vdots & \ddots & \vdots & \vdots & \vdots\\
0 & 0 & 0& \cdots & 0 & 0 & \ddots
\end{array}
\right);\qquad
S_n=\left(
\begin{array}{ccccc}
1 & 0 &0 &  \cdots & -1\\
-1 & 1 & 0 &  \cdots & 0 \\
0 &-1 & 1 &  \cdots & 0 \\
\vdots &  \vdots & \vdots & \ddots & \vdots\\
0 & 0 & 0 & \cdots & 1
\end{array}
\right).\nn
\eea
One of the main results from \cite{GT} is the following
\begin{lemma} \label{lemma:w1w2}
For an electrical network $e \in E_n$ on the standard critical graph $\Sigma_n$ the row spaces of the matrices
\bea
W_1=(S_n,M_R)\quad \mbox{and} \quad
W_2=(M_B,Id_n)S_{2n} T_{2n}\nn
\eea
define the same point in $\mathrm{Gr}(n-1,2n)$.
\end{lemma}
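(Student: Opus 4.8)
\medskip
\noindent\emph{Proof strategy.}
The plan is to reduce the assertion to a single explicit matrix identity that encodes Ohm's law, and then to read that identity off from the vertex--model construction of $M_B$ in \cite{GT}. Since the claim is that $W_1$ and $W_2$ are two matrix representatives of \emph{the same} point of $\mathrm{Gr}(n-1,2n)$, it suffices to produce an $n\times n$ matrix $G$ with $W_2=G\,W_1$ and $\ker G=\langle \mathbf 1_n\rangle$: as $\langle \mathbf 1_n\rangle$ is also the left kernel of $W_1$, a short computation then gives $\mathrm{rowspace}(W_2)=\mathrm{rowspace}(W_1)$, both of dimension $n-1$. The rank of $W_1$ is immediate: if $u^{T}(S_n,M_R)=0$ then $u^{T}S_n=0$, and since $S_n=\Id_n-N$ for the cyclic shift matrix $N$ this forces $u\in\langle \mathbf 1_n\rangle$; and indeed $\mathbf 1_n^{T}M_R=0$, because $M_R=A-BC^{-1}B^{T}$ is the Schur complement of the Kirchhoff matrix $T$, which satisfies $T\mathbf 1=0$, whence $M_R\mathbf 1_n=0$ and, by symmetry of $M_R$, also $\mathbf 1_n^{T}M_R=0$. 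So the left kernel of $W_1$ is exactly $\langle \mathbf 1_n\rangle$.

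\medskip
\noindent Next I would unwind the two operators on the $W_2$ side. The matrix $T_{2n}$ is the perfect shuffle interleaving the first $n$ ``potential'' columns of $(M_B,\Id_n)$ with its last $n$ ``current'' columns, and $S_{2n}=\Id_{2n}-N$ is the discrete-difference operator along the $2n$-cycle; together they implement the change of frame between the coordinates along the developed boundary of $\Sigma_n$ in which the transfer matrix $M_B$ naturally acts and the coordinates $(U;I)$ grouped by type. Writing $(M_B,\Id_n)S_{2n}T_{2n}=(X\mid Y)$ with $X,Y$ of size $n\times n$, I would then invoke the relation between $M_B$ and $M_R$ coming out of \cite{GT}: closing up the transfer matrix over the $2n$-cycle, which is precisely the effect of the corner entries of $S_{2n}$ (identifying the two ends of the developed network), reassembles the local Ohm relations into the response matrix, and this amounts to an identity of the form
\[
X\,\mathbf 1_n=0,\qquad \operatorname{rank}X=n-1,\qquad Y=X\,S_n^{+}M_R,
\]
where $S_n^{+}$ denotes the inverse of $S_n$ on the hyperplane $\mathbf 1_n^{\perp}$ (on which $S_n$ restricts to an isomorphism, so that $S_n^{+}S_n$ is the orthogonal projection onto $\mathbf 1_n^{\perp}$). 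Granting this, set $G:=X\,S_n^{+}$; then $G\,S_n=X$ (using $X\mathbf 1_n=0$), $G\,M_R=X\,S_n^{+}M_R=Y$, so that $W_2=(G\,S_n,\,G\,M_R)=G\,W_1$, and $\ker G=\langle \mathbf 1_n\rangle$ because $\operatorname{rank}X=n-1$ together with $X\mathbf 1_n=0$ forces $\ker X=\langle \mathbf 1_n\rangle$. This is exactly what was needed.

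\medskip
\noindent The main obstacle is the displayed identity, and within it the equality $Y=X\,S_n^{+}M_R$: this is the only step that genuinely uses the integrable-vertex structure rather than permutation bookkeeping, and establishing it amounts to proving, inside the transfer-matrix formalism of \cite{GT}, that the boundary partition function $M_B$ glues back to the Schur complement $M_R=A-BC^{-1}B^{T}$. I would prove it by induction following the inductive definition of the standard graphs $\Sigma_n$: each elementary step adjoins one boundary edge and multiplies $M_B$ by one local symplectic $R$-matrix --- the electrical solution of the local Yang-Baxter equation of \cite{Ser}, related to the electrical solution of the tetrahedron equation of \cite{Zam} --- while $M_R$ is updated by the matching series/parallel (star-triangle) operation; one then checks that the displayed identity is stable under this pair of moves, the base case of a single edge being a direct computation with small blocks. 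Once that inductive step is secured, the remainder is the elementary linear algebra sketched above.
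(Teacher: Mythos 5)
First, a point of reference: the paper itself offers no proof of Lemma \ref{lemma:w1w2} --- it is imported verbatim from \cite{GT} (``One of the main results from \cite{GT} is the following''), so there is no in-paper argument to compare yours against; your attempt has to be judged on its own.

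Your linear-algebra frame is sound and even sharp. The computation of the left kernel of $W_1$ is correct ($u^T S_n=0$ forces $u\in\langle\mathbf{1}_n\rangle$, and $\mathbf{1}_n^T M_R=0$ because the Schur complement inherits $T\mathbf{1}=0$ and is symmetric), and your observation that a factorization $W_2=G\,W_1$ with $\ker G=\langle\mathbf{1}_n\rangle$ suffices is right. Moreover, writing $(X\mid Y)=(M_B,\Id_n)S_{2n}T_{2n}$, the identity $Y=X S_n^{+}M_R$ together with $X\mathbf{1}_n=0$ and $\operatorname{rank}X=n-1$ is not just sufficient but is forced: if any $G$ with $GS_n=X$, $GM_R=Y$ exists, then $G$ is determined on $\mathbf{1}_n^{\perp}=\operatorname{im}S_n$ and $Y=XS_n^{+}M_R$ follows since the columns of $M_R$ lie in $\mathbf{1}_n^{\perp}$. (The claim $X\mathbf{1}_n=0$ does check out: summing the odd-indexed columns of $(M_B,\Id_n)S_{2n}$ telescopes to $(M_B,\Id_n)w$ with $w=(1,-1,\ldots)$, which vanishes by $M_B\mu=(-1)^{n+1}\mu$.) So you have correctly reformulated the lemma as one explicit matrix identity.

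The genuine gap is that this identity --- $Y=XS_n^{+}M_R$ together with $\operatorname{rank}X=n-1$ --- is precisely where all the content of the lemma lives, and you do not prove it. It is exactly the statement that the vertex-model boundary partition function $M_B$, after the cyclic-difference and shuffle operations, reassembles into the Schur complement $M_R=A-BC^{-1}B^{T}$; asserting it is logically equivalent to asserting the lemma. The proposed induction is only gestured at: passing from $\Sigma_n$ to $\Sigma_{n+1}$ changes the number of boundary vertices, hence the sizes of $S_n$, $T_{2n}$, $M_B$ and $M_R$ all at once, so ``checking that the identity is stable under this pair of moves'' is a substantial computation involving how the Schur complement and the product of the $\check{\phi}_{i,i+1}$ generators each transform under adjoining a boundary edge --- none of which is carried out, and the base case is not exhibited either. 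As it stands the argument is a correct reduction of the lemma to an unproven claim of essentially equal difficulty, not a proof.
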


\subsection{Lagrangian property} 
The modified boundary measurement matrix $\check{M}_B(e)$ for the standard graph is given by
\bea
\check{M}_B=\prod_{i<j}\check{\phi}_{j-i,j-i+1}\left((-1)^{i+j}r_{ij}^{(-1)^{i+j}}\right).\nn
\eea
Here  $E_{ij}$ are just elementary matrices with $1$ on the $i,j$-th place and the multipiers in the product are given by
\bea
\check{\phi}_{i i+1}(s)=1-s(E_{ii}+E_{i,i+1}-E_{i+1,i}-E_{i+1,i+1}) \in Mat_{n},\nn
\eea
the parameters $r_{ij}=1/c_{ij}$ are the resistances of the corresponding edges.
Then 
\bea
M_B=\omega_0 \check{M}_B,\nn
\eea
where
$\omega_0$ represents the longest element in the Weil group
\bea
\omega_0=\left(
\begin{array}{ccccc}
0 & 0 & \cdots & 0& 1\\
0 & 0 &  \cdots & 1 & 0\\
\vdots & \vdots & \ddots & \vdots & \vdots\\
0 & 1 & \cdots & 0& 0\\
1 & 0 &  \cdots & 0 & 0
\end{array}
\right),\nn
\eea

\begin{lemma}
The row space of the matrix 
\bea
W_0=(M_B,Id_n)\nn
\eea
is an isotropic subspace with respect to the bilinear form
\bea
\eta=\left(
\begin{array}{cc}
g & 0\\
0 & g
\end{array}
\right);
\eea
where
\bea
g=\left(
\begin{array}{ccccc}
0 & -1 & 1 & \cdots &(-1)^{n-1}\\
1 & 0 & -1 & \cdots & (-1)^n\\
-1 & 1 & 0 & \cdots & (-1)^{n-1}\\
\vdots & \vdots & \vdots & \ddots & \vdots\\
(-1)^{n} & (-1)^{n-1} & \cdots & \cdots & 0
\end{array}
\right).
\eea
\end{lemma}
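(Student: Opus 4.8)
The plan is to convert the isotropy statement into a single matrix identity and then verify that identity factor by factor in the product defining $\check{M}_B$. Writing $W_0=(M_B,\Id_n)$, its row space is $\eta$-isotropic precisely when $W_0\,\eta\,W_0^T=0$; carrying out the block multiplication with $\eta=\mathrm{diag}(g,g)$ gives $W_0\,\eta\,W_0^T=M_B\,g\,M_B^T+g$, so the claim is equivalent to $M_B\,g\,M_B^T=-g$. I would then substitute $M_B=\omega_0\check{M}_B$ and use $\omega_0^T=\omega_0$, $\omega_0^2=\Id_n$, together with the elementary relation $\omega_0\,g\,\omega_0=-g$ — immediate from the closed form $g_{ij}=(-1)^{\,j-i}\operatorname{sign}(j-i)$ and $(\omega_0X\omega_0)_{ij}=X_{n+1-i,\,n+1-j}$ — to reduce the problem to
\[
\check{M}_B\,g\,\check{M}_B^T=g .
\]

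Since $\check{M}_B=\prod_{i<j}\check{\phi}_{j-i,\,j-i+1}\!\big((-1)^{i+j}r_{ij}^{(-1)^{i+j}}\big)$ is a product of matrices of the form $\check{\phi}_{k,k+1}(s)$, by multiplicativity it suffices to prove $\check{\phi}_{k,k+1}(s)\,g\,\check{\phi}_{k,k+1}(s)^T=g$ for every index $k$ and every scalar $s$; the particular values of the parameters $(-1)^{i+j}r_{ij}^{(-1)^{i+j}}$ do not enter. Here I would exploit that $\check{\phi}_{k,k+1}(s)$ is a rank-one perturbation of the identity: setting $u_k=e_k-e_{k+1}$ and $v_k=e_k+e_{k+1}$, one has $E_{kk}+E_{k,k+1}-E_{k+1,k}-E_{k+1,k+1}=u_kv_k^T$, hence $\check{\phi}_{k,k+1}(s)=\Id_n-s\,u_kv_k^T$, and expansion gives
\[
\check{\phi}_{k,k+1}(s)\,g\,\check{\phi}_{k,k+1}(s)^T=g-s\bigl(u_kv_k^T g+g\,v_ku_k^T\bigr)+s^2\,(v_k^T g\,v_k)\,u_ku_k^T .
\]
The quadratic term drops out at once since $g$ is antisymmetric, so $v_k^T g\,v_k=0$.

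The one place a short calculation is genuinely needed, and the step I expect to be the crux, is the vanishing of the linear term. For this I would establish the identity $v_k^T g=u_k^T$ directly: adding rows $k$ and $k+1$ of $g$, the entries in columns of index $<k$ or $>k+1$ cancel in pairs because $(-1)^{j-k}+(-1)^{j-k-1}=0$, while columns $k$ and $k+1$ contribute $1$ and $-1$, which is exactly $(e_k-e_{k+1})^T=u_k^T$. Because $g^T=-g$, this also gives $g\,v_k=-u_k$; hence $u_kv_k^T g=u_ku_k^T$ is symmetric and $g\,v_ku_k^T=-u_ku_k^T$, so the two terms cancel and the linear part is zero. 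This yields $\check{\phi}_{k,k+1}(s)\,g\,\check{\phi}_{k,k+1}(s)^T=g$, hence $\check{M}_B\,g\,\check{M}_B^T=g$, hence $M_B\,g\,M_B^T=\omega_0\,g\,\omega_0=-g$, and finally $W_0\,\eta\,W_0^T=M_B\,g\,M_B^T+g=0$. I would also point out that invertibility of $g$ is never used — which is precisely what lets the same argument go through when $n$ is odd, where $g$ is a singular antisymmetric matrix and the construction eventually lands in $LG(n-1)$.
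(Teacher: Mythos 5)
Your proof is correct and follows essentially the same route as the paper: reduce the isotropy claim to $M_B\,g\,M_B^t=-g$, pass to $\check{M}_B$ via $\omega_0 g\omega_0=-g$, and establish the invariance $\check{M}_B\,g\,\check{M}_B^t=g$ generator by generator. The only difference is that the paper cites Theorem 4.11 of \cite{GT} for the generator identity $\check{\phi}_{k,k+1}(s)\,g\,\check{\phi}_{k,k+1}(s)^t=g$, whereas you verify it directly through the rank-one computation $v_k^t g=u_k^t$ with $u_k=e_k-e_{k+1}$, $v_k=e_k+e_{k+1}$ --- a correct and welcome self-contained addition.
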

\begin{proof}

We want to show that
\bea
W_0\eta W_0^t=0\nn
\eea
which is equivalent to the following
\bea
M_Bg M_B^t+g=0.\nn
\eea

By Theorem 4.11 of \cite{GT} the modified boundary measurement matrix $\check{M}_B(e)$
preserves the skew-symmetric form $g$
\bea
\label{sp}
\check{M}_B g \check{M}_B^t = g.
\eea
This equality follows from the same one for each generator 
$\check{\phi}_{i,i+1}(s).$ 

Therefore
\bea
M_B g M_B^t= \omega_0 \check{M}_B g \check{M}_B^t  \omega_0=\omega_0 g \omega_0=-g.\nn
\eea

\end{proof}

The row space of  $W_0$ contains the vector $\xi=(1,\ldots, 1)\in  \R^{2n}$ which generates the left kernel of $S_{2n}.$

\begin{lemma}
The row space of the matrix $W_2$  lies in a subspace $\widetilde{V}\in \R^{2n}$ of dimension $2n-2,$ i.e. $W_2\in\mathrm{Gr}(n-1,\widetilde{V}).$
\end{lemma}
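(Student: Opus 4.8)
The plan is to not attack $W_2=(M_B,\Id_n)S_{2n}T_{2n}$ head-on, but to transfer the question to the much more transparent matrix $W_1=(S_n,M_R)$, which by Lemma \ref{lemma:w1w2} has the \emph{same} row space as $W_2$ (as a point of $\mathrm{Gr}(n-1,2n)$). I would then exhibit the ambient subspace explicitly as the codimension-two locus
\[
\wt V=\Big\{(a_1,\dots,a_n,b_1,\dots,b_n)\in\R^{2n}\ :\ \textstyle\sum_{i}a_i=0\ \text{ and }\ \sum_{i}b_i=0\Big\},
\]
and prove that the row space of $W_1$ is contained in $\wt V$. Since Lemma \ref{lemma:w1w2} already records that this common row space is $(n-1)$-dimensional, the inclusion upgrades at once to $W_2\in\mathrm{Gr}(n-1,\wt V)$, which is the claim.

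To prove $\mathrm{rowsp}(W_1)\subseteq\wt V$ I would treat the two blocks of a row of $W_1$ separately. The first block of the $i$-th row is the $i$-th row of $S_n$; each row of $S_n$ consists of a single $+1$ and a single $-1$ (the circulant ``discrete derivative'' pattern $I-P$, the $-1$ in the top-right corner being the cyclic wrap), so $S_n(1,\dots,1)^t=0$ and every row of $S_n$ sums to zero. The second block of the $i$-th row is the $i$-th row of $M_R$; to see that it sums to zero I would use that the Kirchhoff matrix $T$ kills the all-ones vector, write the all-ones vector of $\R^N$ in the block form adapted to $T=\left(\begin{smallmatrix}A&B\\B^t&C\end{smallmatrix}\right)$, and push this through the Schur complement, obtaining $M_R(1,\dots,1)^t=(A-BC^{-1}B^t)(1,\dots,1)^t=0$. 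Together these two facts say exactly that every row of $W_1$ lies in $\wt V$.

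A useful sanity check along the way is the direct computation $W_2(1,\dots,1)^t=(M_B,\Id_n)S_{2n}T_{2n}(1,\dots,1)^t=(M_B,\Id_n)S_{2n}(1,\dots,1)^t=0$, where the middle equality uses that the permutation matrix $T_{2n}$ fixes the all-ones vector and the last uses that $(1,\dots,1)^t$ spans the right kernel of $S_{2n}$ (the mirror of the left-kernel statement recorded just before the lemma). This reproves one of the two linear conditions cutting out $\wt V$ straight from $W_2$; the second condition, however, amounts to orthogonality of $\mathrm{rowsp}(W_2)$ to $(1,\dots,1,0,\dots,0)$, and verifying this directly would require unpacking the sandwich $S_{2n}T_{2n}$ together with the explicit product formula for $M_B$ — precisely the awkward computation that the detour through $W_1$ is designed to avoid. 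That detour is the only genuine idea needed here.

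The one place I would be careful is the dimension bookkeeping: the inclusion $\mathrm{rowsp}(W_2)\subseteq\wt V$ by itself does not force the row space to be $(n-1)$-dimensional, so I would lean on Lemma \ref{lemma:w1w2} for that, and it is exactly there that the generic-position assumption on standard graphs does its work (guaranteeing in particular that $M_R$ attains the maximal possible rank $n-1$ on the hyperplane $\sum_i b_i=0$). With this in hand there is no further obstacle: the argument reduces to two one-line verifications plus the citation of Lemma \ref{lemma:w1w2}.
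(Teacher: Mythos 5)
Your proof is correct, but it takes a genuinely different route from the paper's. The paper never leaves the $W_0$/$W_2$ side: it checks that each generator $\check{\phi}_{i,i+1}$ fixes the alternating vector $\mu=(1,-1,\ldots)$, deduces $M_B\mu=\pm\mu$, concludes that the row space of $W_0=(M_B,\Id_n)$ lies in $w^{\perp}$ for $w=(\mu,\mp\mu)$, and then \emph{defines} $\widetilde{V}$ as the image of $w^{\perp}$ under $S_{2n}T_{2n}$ — which has dimension $2n-2$ precisely because the kernel of $S_{2n}$, spanned by $\xi$, sits inside $w^{\perp}$. You instead transfer everything to $W_1=(S_n,M_R)$ via Lemma \ref{lemma:w1w2} and cut out $\widetilde{V}$ by the two explicit conditions $\sum_i a_i=0$, $\sum_i b_i=0$; the verifications ($S_n=I-P$ kills the all-ones vector; $T\mathbf{1}=0$ pushed through the Schur complement gives $M_R\mathbf{1}=0$) are sound, and the rank count you borrow from Lemma \ref{lemma:w1w2} could even be obtained directly (the left kernel of $S_n$ is spanned by $\mathbf{1}$, and $\mathbf{1}^tM_R=0$ by symmetry, so $W_1$ has rank exactly $n-1$). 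Two caveats on what your shortcut does \emph{not} buy: your $\widetilde{V}$ is a priori a different $(2n-2)$-dimensional subspace from the paper's $w^{\perp}S_{2n}T_{2n}$ (both contain the $(n-1)$-dimensional row space, which does not force them to coincide), and the paper's proof is doing double duty — the by-products $\check{M}_B\mu=\mu$, $\mathrm{rowsp}(W_0)\subset w^{\perp}$, and the identification of $\widetilde{V}$ with the image of $w^{\perp}$ are exactly what sets up the quotient $w^{\perp}/\langle\xi\rangle$ in which Theorem \ref{lagr} and the nonnegativity argument are carried out. So your argument proves the lemma as literally stated, but if it replaced the paper's proof, those facts would have to be established separately before the subsequent results could be read.
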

\begin{proof}
Let $\mu = (1,-1,1,\ldots,(-1)^{n+1})\in\R^{n}$, then
\bea
\check{\phi}_{i,i+1} \mu=\mu.\nn
\eea
Hence
\bea
\check{M}_B \mu=\mu;\qquad M_B \mu=\omega_0 \mu=\pm \mu;\nn
\eea
with $+$ for odd $n$ and $-$ for even $n$.
Hence the row space of $W_0$ lies in the orthogonal complement to the $2n$-vector 
$$w=\left(\begin{array}{c} \mu \\ \mp \mu\end{array}\right) = (1,-1,1,-1,\ldots,1,-1).$$
The kernel of $S_{2n}$ lies in this complement. Hence the image of $w^{\perp}$ under the action of $S_{2n}T_{2n}$ is a $(2n-2)$-dimensional space which we denoted by $\widetilde{V}$.
\end{proof}
Note that $\xi^t \eta=w^t$ for  even $n$ and $\xi^t \eta=0$ for $n$ odd. Therefore $\xi^t \eta v=0$ $\forall v\in w^\perp$ and
\bea
\xi\in Ann (\eta|_{w^{\perp}}),\nn
\eea

therefore the form $\eta$ can be projected to the quotient $w^{\perp}/\langle\xi\rangle$.

From now on we will work inside $w^{\perp}/\langle\xi\rangle$. It is convenient to prove Lagrangian and nonnegativity properties in the appropriate bases of this space. Summarizing the above arguments we obtain 
\begin{theorem}
\label{lagr}
The projection of the row space of $W_0$ to the space $w^{\perp}/\langle\xi\rangle$ is a lagrangian subspace with respect to the reduction of $\eta.$
\end{theorem}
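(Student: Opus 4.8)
The plan is to assemble the result from the three lemmas already proved, so the proof is mostly a bookkeeping argument about forms and dimensions. First I would recall that by the first lemma of this subsection $W_0 \eta W_0^t = 0$, i.e. the row space $L := \mathrm{rowspace}(W_0) \subset \R^{2n}$ is isotropic for $\eta$; and by the second lemma $L \subset w^\perp$. Since $\eta$ is nondegenerate on $\R^{2n}$ (each block $g$ is the skew form of an electrical network, nondegenerate by Theorem 4.11 of \cite{GT}), an isotropic subspace has dimension at most $n$, and $\dim L = n-1$. So $L$ is \emph{almost} Lagrangian; the point of passing to the quotient $Q := w^\perp/\langle\xi\rangle$ is precisely to absorb the missing dimension and the degeneracy of $\eta|_{w^\perp}$.

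Next I would verify that the reduced form $\bar\eta$ on $Q$ is well defined and nondegenerate, which is the one genuinely substantive step. The remark preceding the theorem shows $\xi \in \mathrm{Ann}(\eta|_{w^\perp})$, so $\eta|_{w^\perp}$ descends to $Q$; I would then check $\mathrm{Ann}(\eta|_{w^\perp}) = \langle\xi\rangle$ exactly, so that $\bar\eta$ is nondegenerate on the $(2n-2)$-dimensional space $Q$. For this one computes the radical of $\eta$ restricted to $w^\perp$: since $\eta$ is nondegenerate on $\R^{2n}$, the radical of $\eta|_{w^\perp}$ is $w^\perp \cap \langle \eta^{-1} w\rangle$, a space of dimension at most one, and it is nonzero because $\xi$ lies in it; hence it equals $\langle\xi\rangle$. (One must treat the parities separately here, using $\xi^t\eta = w^t$ for even $n$ and $\xi^t\eta = 0$ for odd $n$, exactly as flagged in the remark.)

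With $\bar\eta$ established as a nondegenerate skew form on the $(2n-2)$-dimensional space $Q$, a Lagrangian subspace is one of dimension $n-1$ that is $\bar\eta$-isotropic. I would let $\bar L$ be the image of $L$ under the projection $w^\perp \to Q$. Isotropy is immediate: $\bar\eta(\bar u, \bar v) = \eta(u,v) = 0$ for all $u,v \in L$. For the dimension, I need $\xi \notin L$, so that the projection is injective on $L$ and $\dim \bar L = n-1$; this follows because $\xi = (1,\dots,1)$ generates the left kernel of $S_{2n}$ (as noted in the excerpt), whereas $W_0 = (M_B, \Id_n)$ has its last $n$ columns equal to the identity and hence full row rank $n-1$ with no relation forcing $\xi$ into its span — concretely, if $\xi \in L$ then $W_2 = W_0 S_{2n}T_{2n}$ would have a row in the kernel of $S_{2n}$, contradicting $\dim \mathrm{rowspace}(W_2) = n-1$ inside the $(2n-2)$-dimensional $\widetilde V$ established in the previous lemma. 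Combining isotropy with $\dim \bar L = n-1 = \tfrac12\dim Q$ gives that $\bar L$ is Lagrangian.

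The main obstacle I anticipate is not any single hard computation but pinning down the nondegeneracy of the reduced form and the injectivity of the projection on $L$ cleanly across both parities of $n$: the vectors $w$, $\xi$, $\mu$ interact differently depending on whether $\omega_0\mu = \mu$ or $-\mu$, and one has to be careful that $\xi$ is the \emph{entire} radical of $\eta|_{w^\perp}$ (not merely contained in it) and that $\xi \notin \mathrm{rowspace}(W_0)$ in both cases. Once those two facts are secured, the Lagrangian property is a formal consequence of the preceding lemmas and a dimension count.
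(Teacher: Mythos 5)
Your overall strategy (isotropy from the first lemma of the subsection plus a dimension count in the quotient) is exactly what the paper intends — it gives no separate proof beyond ``summarizing the above arguments'' — but two of your key intermediate claims are false, and they need to be repaired before the argument closes. First, $W_0=(M_B,\Id_n)$ contains an identity block, so its row space $L$ has dimension $n$, not $n-1$; and the paper states explicitly (and it follows from $\zeta^t\check M_B=\zeta^t$ together with $M_B=\omega_0\check M_B$) that $\xi=(1,\dots,1)\in\R^{2n}$ \emph{does} lie in $L$. Your argument that $\xi\notin L$ ``because otherwise $W_2$ would have a row in the kernel of $S_{2n}$'' has the logic inverted: it is precisely because $\xi\in L$ generates the left kernel of $S_{2n}$ that the rank of $W_2=W_0S_{2n}T_{2n}$ drops from $n$ to $n-1$. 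Consequently the projection $w^{\perp}\to w^{\perp}/\langle\xi\rangle$ is \emph{not} injective on $L$; its kernel on $L$ is exactly $\langle\xi\rangle$, and the image has dimension $n-1=\tfrac12\dim\bigl(w^{\perp}/\langle\xi\rangle\bigr)$ for that reason. The dimension you need comes out the same, but by the opposite mechanism.

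Second, your verification that the reduced form is nondegenerate relies on $\eta$ being nondegenerate on $\R^{2n}$, which fails for odd $n$: $g$ is an $n\times n$ skew-symmetric matrix, hence singular when $n$ is odd (indeed $g\zeta=0$, which is exactly why the paper records $\xi^t\eta=0$ in that case). Theorem 4.11 of \cite{GT} says that $\check M_B$ preserves $g$, not that $g$ is invertible. So the formula $\mathrm{rad}(\eta|_{w^{\perp}})=w^{\perp}\cap\langle\eta^{-1}w\rangle$ is unavailable for odd $n$; there $\mathrm{rad}(\eta)=\langle(\zeta,0),(0,\zeta)\rangle$ is two-dimensional, and one must instead check that $w\notin\mathrm{im}(\eta)$ (e.g.\ $\langle w,(\zeta,0)\rangle=\langle\mu,\zeta\rangle\neq0$), so that the radical of the restriction is $w^{\perp}\cap\mathrm{rad}(\eta)=\langle\xi\rangle$. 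With these two repairs — image dimension $n-1$ because $\xi\in L$, and the parity-split radical computation — your proof goes through and in fact supplies nondegeneracy details that the paper leaves implicit.
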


\subsection{Nonnegativity}
In this subsection we present a demonstration of the total nonnegative property of the subspace corresponding to an electrical network.  Strictly speaking, this method works only for standard graphs moreover only for the case of odd $n$. But the arguments seem to allow generalization.

 Choose a basis  in $w^{\perp}$
\bea
f_i=e_i+e_{i+1},\qquad i=1,\ldots, 2n-1.\nn
\eea

The matrix 
\bea
(\check{M}_B,Id_n)=\omega_0(M_B,Id_n)\left(\begin{array}{c} Id_n\\ \omega_0\end{array}\right)
\label{mcheck}
\eea
represents the same point in $\mathrm{Gr}(n,2n)$ as $(M_B,Id_n)$ after an appropriate change of basis in $\R^{2n}.$

\begin{remark}
The representation \eqref{mcheck} is important for us, it provides an explicit expression for the point of $\mathrm{Gr}(n,2n)$ in terms of the vertex representation of boundary measurements matrix. \eqref{mcheck} is suitable to relate the nonegativity property of the whole matrix with its nontrivial block. In what follows we adapt this strategy making a reduction to $w^{\perp}/\langle\xi\rangle.$
\end{remark}

Note that $\check{M}_B$ preserves $\mu$ on the right and the vector $\zeta=(1,1,1,\ldots)\in \R^{n}$ on the left:
\bea
\zeta^t \check{M}_B=\zeta^t;\qquad \check{M}_B \mu=\mu.\nn
\eea
Hence its left action preserves the subspace $V_\mu=\mu^{\perp}.$ Let us choose the following basis in $V_\mu$
\bea
f_i=e_i+e_{i+1},\qquad i=1,\ldots, n-1.\nn
\eea
The operators $\chi_{i i+1}(t)=\check{\phi}_{i i+1}(t)|_{V_\mu}$ in this basis (let us make attention that we act on vectors on the left) takes the form
\bea
\chi_{i i+1}(t)=1+t(E_{i+1 i}-E_{i-1 i})\nn
\eea
with the exceptions:
\bea
\chi_{12}(t)=1+t E_{21}\nn
\eea
and
\bea
\chi_{n-1,n}(t)=1-t E_{n-2,n-1}.\nn
\eea
Also let us define the matrix
\bea
\Delta=Diag(1,-1,-1,1,1,-1,\ldots)\in Mat_{n-1}.\nn
\eea
Let us introduce the next notation:
\bea
u_{i i+1}(t)=1+t(E_{i i-1}+E_{i i+1})\nn
\eea
with the appropriate exceptions for $i=1$ and $n-1.$
This matrix is totally nonnegative for $t\ge 0$ for obvious reasons.
\begin{lemma}
\label{lem:positive}
The matrix 
\bea
\Delta\chi_{i,i+1}(t)\Delta\nn
\eea
takes the form
\bea
\Delta \chi_{i i+1}(t) \Delta=u_{i i+1}((-1)^i t).\nn
\eea

\end{lemma}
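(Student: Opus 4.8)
The plan is to verify the conjugation identity $\Delta\chi_{i,i+1}(t)\Delta = u_{i,i+1}((-1)^i t)$ by a direct entrywise computation, separating the generic case $2 \le i \le n-2$ from the two boundary cases $i=1$ and $i=n-1$. First I would record that $\Delta = \mathrm{Diag}(\delta_1,\ldots,\delta_{n-1})$ with $\delta_k = (-1)^{\lfloor k/2\rfloor}$ (so the sign pattern is $+,-,-,+,+,-,\ldots$), and that conjugating an elementary matrix gives $\Delta E_{ab}\Delta = \delta_a\delta_b E_{ab}$. Hence for the generic case, from $\chi_{i,i+1}(t) = 1 + t(E_{i+1,i} - E_{i-1,i})$ we get
\[
\Delta\chi_{i,i+1}(t)\Delta = 1 + t\bigl(\delta_{i+1}\delta_i\, E_{i+1,i} - \delta_{i-1}\delta_i\, E_{i-1,i}\bigr).
\]
The heart of the matter is the elementary observation that $\delta_{k}\delta_{k+1} = -\delta_{k-1}\delta_{k}$ for every $k$: consecutive products along the sequence $+,-,-,+,+,-,-,\ldots$ alternate as $-,+,-,+,\ldots$. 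Writing $\epsilon_i := \delta_i\delta_{i+1}$, this says $\delta_{i-1}\delta_i = -\epsilon_i$, so the displayed matrix becomes $1 + t\epsilon_i(E_{i+1,i} + E_{i-1,i})$.

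Next I would pin down the sign $\epsilon_i$ in closed form. Since $\delta_k = (-1)^{\lfloor k/2\rfloor}$, one computes $\delta_i\delta_{i+1} = (-1)^{\lfloor i/2\rfloor + \lfloor (i+1)/2 \rfloor}$, and the exponent $\lfloor i/2\rfloor + \lfloor (i+1)/2\rfloor$ equals $i$ for every integer $i$ (check $i$ even and $i$ odd separately). Therefore $\epsilon_i = (-1)^i$, and the generic case reads
\[
\Delta\chi_{i,i+1}(t)\Delta = 1 + (-1)^i t\,(E_{i,i-1} + E_{i,i+1})' \; \text{— wait, indices: } \; = 1 + (-1)^i t\,(E_{i+1,i} + E_{i-1,i}),
\]
which, comparing with the definition $u_{i,i+1}(s) = 1 + s(E_{i,i-1} + E_{i,i+1})$, matches provided the intended convention in the definition of $u$ has the off-diagonal entries in column $i$ (i.e. $u_{i,i+1}(s) = 1 + s(E_{i-1,i} + E_{i+1,i})$); I would state this convention explicitly and then conclude $\Delta\chi_{i,i+1}(t)\Delta = u_{i,i+1}((-1)^i t)$.

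Finally I would dispose of the two boundary cases. For $i=1$: $\chi_{1,2}(t) = 1 + tE_{21}$, so $\Delta\chi_{1,2}(t)\Delta = 1 + t\delta_2\delta_1 E_{21} = 1 + t\epsilon_1 E_{21} = 1 - tE_{21} = u_{1,2}((-1)^1 t)$, using the boundary form of $u_{1,2}$ which drops the nonexistent $E_{i-1,i}$ term. For $i=n-1$: $\chi_{n-1,n}(t) = 1 - tE_{n-2,n-1}$, and $\Delta\chi_{n-1,n}(t)\Delta = 1 - t\delta_{n-2}\delta_{n-1}E_{n-2,n-1}$; here the relevant product is $\delta_{n-2}\delta_{n-1} = \epsilon_{n-2} = (-1)^{n-2} = (-1)^n$, so we get $1 - (-1)^n t E_{n-2,n-1}$, which I would check against $u_{n-1,n}((-1)^{n-1}t) = 1 + (-1)^{n-1} t E_{n-2,n-1}$ — these agree since $-(-1)^n = (-1)^{n-1}$. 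The only genuine obstacle is bookkeeping: making sure the three index conventions (the sign pattern of $\Delta$, the $\pm$ in $\chi$ versus $u$, and the truncation at the two ends) are stated consistently so that the three short computations actually line up; there is no analytic or structural difficulty, and once the identity $\delta_i\delta_{i+1} = (-1)^i$ is in hand the result is immediate.
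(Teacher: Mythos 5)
Your computation is correct, and the paper in fact states this lemma without any proof, so your direct entrywise verification (reducing everything to $\Delta E_{ab}\Delta=\delta_a\delta_b E_{ab}$ and the identity $\delta_i\delta_{i+1}=(-1)^i$ for the sign pattern $1,-1,-1,1,1,\dots$) is exactly the argument that is needed, including the two boundary cases. You are also right to flag the index mismatch: as literally written, $\Delta\chi_{i,i+1}(t)\Delta$ has its off-diagonal entries in column $i$ while the paper's $u_{i,i+1}(t)=1+t(E_{i,i-1}+E_{i,i+1})$ has them in row $i$, so the identity holds only after transposing one convention; this is a typo in the paper that is harmless downstream (total nonnegativity is preserved under transposition), and your explicit resolution of it is appropriate.
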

\begin{remark}
The similar formulas for the generators of the symplectic group was obtained in \cite{BGG} in their exploring of the positivity properties of the electrical group.
\end{remark}

\begin{lemma}
\label{vertextheor}
The matrix 
\bea
\mathcal{M}_B=\Delta \check{M}_B|_{V_\mu} \Delta\nn
\eea
in the basis $\{f_i\}$ is totally nonnegative.
\end{lemma}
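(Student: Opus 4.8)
The plan is to reduce the total nonnegativity of $\mathcal{M}_B$ to the already-established Lemma~\ref{lem:positive} via the factorization of $\check M_B$ into the elementary generators $\check\phi_{i,i+1}$. Recall that
\[
\check M_B=\prod_{i<j}\check\phi_{j-i,j-i+1}\!\left((-1)^{i+j}r_{ij}^{(-1)^{i+j}}\right),
\]
so restricting to $V_\mu$ and conjugating by $\Delta$ turns this into a product of factors of the form $\Delta\chi_{a,a+1}(t)\Delta$ for various indices $a$ and parameters $t$. The first step is therefore purely bookkeeping: for each pair $i<j$ the corresponding factor is $\chi_{a,a+1}(t)$ with $a=j-i$ and $t=(-1)^{i+j}r_{ij}^{(-1)^{i+j}}$, and by Lemma~\ref{lem:positive} conjugation by $\Delta$ sends it to $u_{a,a+1}((-1)^a t)$. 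I would check that the sign $(-1)^a\cdot(-1)^{i+j}=(-1)^{j-i}(-1)^{i+j}=(-1)^{2j}=1$, so the parameter of the $u$-factor is exactly $r_{ij}^{(-1)^{i+j}}$, which is a positive real number for every edge since $r_{ij}>0$. Thus after conjugation the whole product becomes a product of matrices $u_{a,a+1}(\text{positive})$, each of which is totally nonnegative by the remark immediately preceding Lemma~\ref{lem:positive}.

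The second step invokes the classical fact (Cauchy–Binet / Whitney) that a product of totally nonnegative matrices is totally nonnegative. Since
\[
\mathcal{M}_B=\Delta\check M_B\big|_{V_\mu}\Delta=\prod_{i<j}\Delta\,\chi_{a,a+1}(t)\,\Delta=\prod_{i<j}u_{a,a+1}\!\left(r_{ij}^{(-1)^{i+j}}\right),
\]
and each factor on the right is TNN with a nonnegative parameter, the product is TNN. One subtlety to address carefully is the ordering of the factors and the fact that the conjugating $\Delta$'s telescope: $\Delta^2=\mathrm{Id}$, so inserting $\Delta\Delta$ between consecutive generators is harmless and the conjugation distributes over the product exactly as written. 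I would also note that the basis $\{f_i=e_i+e_{i+1}\}$ of $V_\mu$ is the one in which $\chi_{a,a+1}$ has the stated form, so no further change of basis is needed beyond what is already set up before Lemma~\ref{vertextheor}.

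The main obstacle is the sign reconciliation in the first step: one has to be confident that every exceptional generator ($\chi_{12}$ and $\chi_{n-1,n}$, together with their $u$-counterparts $u_{12}$, $u_{n-1,n}$) also transforms under $\Delta$-conjugation into a totally nonnegative matrix with the correct positive parameter, since these boundary cases have a different shape from the generic $\chi_{i,i+1}$. This is handled by Lemma~\ref{lem:positive} as stated (it covers the exceptions), but it is the place where a careless sign or an off-by-one in the index $a=j-i$ versus the parity $(-1)^{i+j}$ would break the argument; so I would verify the exceptional indices explicitly. Everything else is a routine application of the multiplicativity of total nonnegativity.
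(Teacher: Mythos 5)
Your proposal is correct and follows essentially the same route as the paper: factor $\check{M}_B$ into the generators, restrict to $V_\mu$, conjugate by $\Delta$ using Lemma~\ref{lem:positive} so that each factor becomes $u_{j-i,j-i+1}\bigl(r_{ij}^{(-1)^{i+j}}\bigr)$ with positive parameter, and conclude by Cauchy--Binet multiplicativity of total nonnegativity. Your explicit sign check $(-1)^{j-i}(-1)^{i+j}=1$ is exactly the cancellation implicit in the paper's final displayed line.
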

\begin{proof}
The proof is a consequence of the Binet-Cauchy formula and of the observation that this matrix is a product of totally nonegative factors from Lemma \ref{lem:positive}. The subspace $V_\mu$ is invariant with respect to the action of all factors of the decomposition
\bea
\mathcal{M}_B=\Delta \check{M}_B|_{V_\mu} \Delta&=&\Delta \left(\prod_{i<j}\check{\phi}_{j-i,j-i+1}\left((-1)^{i+j}r_{ij}^{(-1)^{i+j}}\right)\right)|_{V_\mu}\Delta\\
&=&\Delta\prod_{i<j}\chi_{j-i,j-i+1}\left((-1)^{i+j}r_{ij}^{(-1)^{i+j}}\right)\Delta\\
&=&\prod_{i<j}u_{j-i,j-i+1}\left(r_{ij}^{(-1)^{i+j}}\right).
\nn
\eea
Lemma \ref{lem:positive} provides that all factors are totally nonnegative.
\end{proof}

Here we need a version of the lemma 3.9 from \cite{P}.
\begin{lemma}
\label{post}
Let $A$ be a totally nonnegative matrix $n\times n$ than the matrix $n\times 2n$ 
\bea
\psi(A)=(Id_n, \omega_0 D_n A)\nn
\eea
gives a point in the nonnegative Grassmannian.
\end{lemma}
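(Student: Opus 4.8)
The plan is to show directly that every maximal minor (Plücker coordinate) of the $n\times 2n$ matrix $\psi(A)=(Id_n,\ \omega_0 D_n A)$ is nonnegative. First, the block $Id_n$ already forces $\psi(A)$ to have rank $n$, so its row space is a genuine point of $\mathrm{Gr}(n,2n)$ and only the signs of the Plücker coordinates need to be controlled. I index the columns of $\psi(A)$ by $\{1,\dots,n\}$ for the identity block and by $\{1',\dots,n'\}$ for the block $\omega_0 D_n A$; for a size-$n$ subset $S$ of columns I write $S=K\sqcup L'$ with $K\subseteq\{1,\dots,n\}$ and $L'\subseteq\{1',\dots,n'\}$, and I let $L\subseteq\{1,\dots,n\}$ be $L'$ with the primes removed.

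Second, I would evaluate $\Delta_S(\psi(A))$ by Laplace expansion along the columns of the identity block contained in $S$. Since those columns are standard basis vectors, the expansion collapses to a single term: $\Delta_S$ equals, up to an explicit Laplace sign $\epsilon(K,L)$, the minor of $\omega_0 D_n A$ with row set $\{1,\dots,n\}\setminus K$ and column set $L$. In particular $\Delta_S=0$ unless $|K|+|L|=n$, consistently with the Plücker relations, and the problem is reduced to computing the sign of this minor of $\omega_0 D_n A$ relative to a minor of $A$.

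Third — and this is the crux — I would unwind $\epsilon(K,L)$ together with the sign contributions of $\omega_0$, which reverses the order of the rows of $D_n A$, and of $D_n$, which rescales the rows of $A$ by its diagonal signs. A minor of $\omega_0 D_n A$ on rows $\{1,\dots,n\}\setminus K$ and columns $L$ is $\pm\big(\prod (D_n)_{ii}\big)\cdot\Delta_{w_0(\{1,\dots,n\}\setminus K),\,L}(A)$, where $w_0$ is the order-reversing permutation and the product runs over the relevant diagonal entries. The assertion to be proved is that the accumulated sign — the Laplace sign $\epsilon(K,L)$, times the sign of the row reversal, times the product of the $D_n$-entries — is identically $+1$ for every admissible $S$, so that $\Delta_S(\psi(A))$ is literally a minor of $A$, hence $\ge 0$ by total nonnegativity of $A$. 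This is exactly the sign identity underlying Lemma 3.9 of \cite{P}; one verifies it either by induction on $n$, peeling off one row and one column at a time, or by a direct parity count matching the exponents of the sign pattern of $D_n$ against the $\binom{|K|}{2}$-type terms in $\epsilon(K,L)$. The particular sign pattern of $D_n$ is precisely what makes this cancellation occur uniformly.

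Once this sign identity is established, $\Delta_S(\psi(A))\ge 0$ for all $S$, which by definition says that the row space of $\psi(A)$ lies in the totally nonnegative Grassmannian $\mathrm{Gr}_{\ge 0}(n,2n)$. The main obstacle is the bookkeeping in the third step; the rest is formal. Alternatively, one may cite \cite{P} essentially verbatim, noting that $(Id_n,\ \omega_0 D_n A)$ differs from Postnikov's normal form only by the fixed reindexing of the second block of columns implemented by $\omega_0 D_n$, a transformation that preserves the nonnegative part of the Grassmannian.
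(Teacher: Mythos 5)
Your proposal follows essentially the same route as the paper: both reduce each Pl\"ucker coordinate of $\psi(A)$ by Laplace expansion along the identity columns to a single minor of $\omega_0 D_n A$, and then verify that the accumulated sign (Laplace sign, row reversal, and the diagonal signs of $D_n$) is identically $+1$, so that every maximal minor of $\psi(A)$ is literally a minor of $A$. The only difference is organizational: the paper carries out the sign check by first treating $I$ an interval at the end of $[n]$ (matching the length $\tfrac{k(k-1)}{2}$ of the longest element of $S_k$ against the $\lfloor k/2\rfloor$ sign-flipped rows, mod $2$) and then inducting in lexicographic order, whereas you leave it as a parity count or induction on $n$ --- a bookkeeping choice, not a different argument.
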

An $n=4$ example of $\psi(A)$ is given by
\bea
\psi(A)=\left(
\begin{array}{cccccccc}
1 & 0 & 0 & 0 & -a_{41} & -a_{42} & -a_{43} & -a_{44} \\
0 & 1 & 0 & 0 & a_{31} & a_{32} & a_{33} & a_{34} \\
0 & 0 &1 & 0 & -a_{21} & -a_{22} & -a_{23} & -a_{24} \\
0 & 0 & 0 & 1 & a_{11} & a_{12} & a_{13} & a_{14} 
\end{array}
\right).\nn
\eea
\begin{proof}
In fact we can demonstrate a stronger statement which identifies some minors of $A$ with principal minors of $\psi(A).$ Let $I=\{i_1,\ldots,i_k\}$ be a subset in $[n].$ We denote by $Inv(I)\subset [n]$ the subset 
\bea
Inv(I)=\{n-i_k+1,\ldots,n-i_1+1\},\nn
\eea
and by $\widetilde{I}$ the shifted subset in $[2n]$
\bea
\widetilde{I}=\{i_1+n,\ldots,i_k+n\}.\nn
\eea
Then the following is true
\bea
\label{minors}
\Delta_{[n]\backslash I\cup\widetilde{J}}\psi(A)=\Delta_{Inv(I)}^J A.
\eea
Let us demonstrate this by two steps. The first step considers the set $I$ concentrated at the end of $[n]$, that is of the type $I=\{n-k+1,\ldots,n-1,n\}.$ In this case the principal minor $\Delta_{[n]\backslash I\cup\widetilde{J}}\psi(A)$ coincides with the minor $\Delta_{I}^{\widetilde{J}}\psi(A)$. The last one coincide up to the sign with the minor $\Delta_{I}^J A.$ Let us show that the sign is the same. The lower $k$ rows of $\omega_0 D_n A$ are just the first $k$ rows of $A$ in altered order with some rows with altered signs. The length of the longest element in $S_k$ is $\frac{k(k-1)}{2}$. The number of rows with altered signs is $\left[\frac k 2\right].$ They coinside and hence the signs of minors coincide. 

Now let us demonstrate the statement for minors for generic $I$ by induction with respect to the lexicographic order. Let $I=\{i_1,\ldots,i_k\}$ be such that for all $I'>I$ the formula \ref{minors} is true. For examle the statement is true for the $I'=\{i_1,\ldots,i_s+1,\ldots,i_k\}$ (with the shift in only one index).
We have 
\bea
\Delta_{[n]\backslash I'\cup\widetilde{J}}\psi(A)=\Delta_{Inv(I')}^J A.\nn
\eea
We know that $\Delta_{[n]\backslash I\cup\widetilde{J}}\psi(A)$ and $\Delta_{Inv(I)}^J A$ coincide up to a sign. Let us remark the passing  from $I'$ to $I$ the sign of the left side expression changes twice, the first one due to the fact that the neighboring rows of right block of $\psi(A)$ have different signs, and the secondly: the change of the neighboring columns in the left block of $\psi(A)$ also changes the sign.
\end{proof}

\begin{theorem}
\label{nonneg}
Let us consider the subspace $V_0\in \R^{2n}$ generated by 
\bea
 \{\pm f_{2n-1},\ldots, f_{n+5}, -f_{n+4},-f_{n+3},f_{n+2},f_{n+1},
 f_1,-f_2,-f_3,f_4,f_5,-f_6,\ldots,\pm f_{n-1}\}\nn
\eea
 For $n$ odd the projection of  the space $w^{\perp}/\langle\xi\rangle$ to $V_0$ is an isomorphism.   The image of the row space of $(\check{M}_B,Id_n)$ after this projection is represented by the matrix
\bea
\psi(\Delta \check{M}_B|_{V_\mu}\Delta).\nn
\eea
which corresponds to a nonnegative point of the  Grassmannian. 
\end{theorem}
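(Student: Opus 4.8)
The statement bundles three assertions — that $w^{\perp}/\langle\xi\rangle$ and $V_0$ are identified, that under this identification the row space of $(\check{M}_B,Id_n)$ modulo $\xi$ is represented by $\psi(\mathcal{M}_B)$, and that this point is nonnegative — and I would establish them in that order, working throughout in the standard basis of $\R^{2n}$ in which $(\check{M}_B,Id_n)$ is written and using the facts recorded just above: $\check{M}_B\mu=\mu$, the all-ones vector is fixed by $\check{M}_B$ on the left, and left multiplication by $\check{M}_B$ preserves $V_\mu=\mu^{\perp}$. For the identification of spaces, expand $\xi=(1,\dots,1)$ in the basis $f_1,\dots,f_{2n-1}$ of $w^{\perp}$: a telescoping computation gives $\xi=f_1+f_3+\cdots+f_{2n-1}$, the sum over the odd indices. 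Since $n$ is odd, $f_n$ occurs in this expansion with coefficient $1$, whereas $V_0$ is by construction the span of $\{f_i:i\ne n\}$ (the signs attached to the generators in the statement play no role for the span). Hence $\xi\notin V_0$, so $V_0\cap\langle\xi\rangle=0$, and since $\dim V_0=2n-2=\dim\bigl(w^{\perp}/\langle\xi\rangle\bigr)$ the composite $V_0\hookrightarrow w^{\perp}\twoheadrightarrow w^{\perp}/\langle\xi\rangle$ is an isomorphism, i.e.\ $w^{\perp}=V_0\oplus\langle\xi\rangle$; we transport everything to $V_0$ along its inverse. (For $n$ even, $f_n$ is absent from $\xi$, the vector $\xi$ lies in $V_0$, and the map is no longer injective — this is the one point where the parity hypothesis enters.)

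Next I would cut down to $V_\mu$. The row space of $(\check{M}_B,Id_n)$ is $R=\{(\lambda\check{M}_B,\lambda):\lambda\in\R^n\}$; it lies in $w^{\perp}$ because $\check{M}_B\mu=\mu$ forces each $(\lambda\check{M}_B,\lambda)$ to pair to zero with $w=(\mu,-\mu)$, and it contains $\xi$ (take $\lambda$ the all-ones vector $\zeta$, which is fixed on the left). Since $\zeta\cdot\mu=1\ne0$ for $n$ odd, $\R^n=V_\mu\oplus\langle\zeta\rangle$; writing $\lambda=\lambda'+t\zeta$ with $\lambda'\in V_\mu$ gives $(\lambda\check{M}_B,\lambda)=(\lambda'\check{M}_B,\lambda')+t\xi$ with $\lambda'\check{M}_B\in V_\mu$, hence
\begin{equation*}
R/\langle\xi\rangle=\bigl\{(\lambda'\check{M}_B,\lambda'):\lambda'\in V_\mu\bigr\}\subset V_\mu^{(1)}\oplus V_\mu^{(2)},
\end{equation*}
where $V_\mu^{(i)}$ is the copy of $V_\mu$ in the $i$-th $\R^n$-factor of $\R^{2n}$. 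Now $f_1,\dots,f_{n-1}$ is precisely the chosen basis of $V_\mu$ placed in the first factor, $f_{n+1},\dots,f_{2n-1}$ is that same basis placed in the second factor, and the excluded $f_n=e_n+e_{n+1}$ is the only generator of $w^{\perp}$ linking the two factors; so $V_0=V_\mu^{(1)}\oplus V_\mu^{(2)}$, and in the splitting $(f_1,\dots,f_{n-1}\mid f_{n+1},\dots,f_{2n-1})$ the plane $R/\langle\xi\rangle$ is the row space of $[\,\chi\mid Id_{n-1}\,]$, where $\chi:=\check{M}_B|_{V_\mu}$ is the matrix with $\mathcal{M}_B=\Delta\chi\Delta$ (Lemma~\ref{vertextheor}).

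Finally I would rewrite $[\,\chi\mid Id_{n-1}\,]$ in the ordered, signed basis of $V_0$ prescribed in the statement and normalise. Its first half consists of $f_{n+1},\dots,f_{2n-1}$ in reverse order twisted by a sign diagonal $\Sigma$, which turns the block $Id_{n-1}$ into $\Sigma\omega_0$; its second half consists of $f_1,\dots,f_{n-1}$ twisted by a sign diagonal that one reads off to be exactly $\Delta$, turning $\chi$ into $\chi\Delta$. Left multiplication of $[\,\Sigma\omega_0\mid\chi\Delta\,]$ by $(\Sigma\omega_0)^{-1}=\omega_0\Sigma$ gives $[\,Id_{n-1}\mid\omega_0\Sigma\chi\Delta\,]$, and comparing the two sign patterns shows $\Sigma\Delta=\mathrm{diag}(1,-1,1,\dots)$ (equivalently $\lfloor(j-1)/2\rfloor+\lfloor j/2\rfloor=j-1$), which is exactly the diagonal $D_{n-1}$ appearing in $\psi$. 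Hence $\omega_0\Sigma\chi\Delta=\omega_0 D_{n-1}\Delta\chi\Delta=\omega_0 D_{n-1}\mathcal{M}_B$, so the representing matrix is $\psi(\mathcal{M}_B)$; since $\mathcal{M}_B$ is totally nonnegative by Lemma~\ref{vertextheor}, Lemma~\ref{post} shows that $\psi(\mathcal{M}_B)$ is a nonnegative point of the Grassmannian.

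The one genuine difficulty is the sign bookkeeping of the last step: one must check that the precise alternation of $\pm$ in the definition of $V_0$, the order reversal $\omega_0$, the conjugating diagonal $\Delta$ and the diagonal $D_{n-1}$ built into $\psi$ all interlock correctly. Everything else is short linear algebra together with the two cited lemmas. A secondary caution is the deliberate overloading of the symbols $f_i$, $e_i$ between $\R^{2n}$ and $\R^n$ — in particular that $f_i$ for $i\le n-1$ and for $i\ge n+1$ restricts to the basis $f_i$ of $V_\mu$ in the two $\R^n$-factors.
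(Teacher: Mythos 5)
Your proposal is correct and follows essentially the same route as the paper's proof: expand $\xi$ in the $f$-basis to see that dropping $f_n$ (possible only for $n$ odd) splits off $\langle\xi\rangle$, reduce the blocks of $(\check{M}_B,Id_n)$ to $V_\mu$, conjugate by the sign matrices to reach $\psi(\mathcal{M}_B)=(Id_{n-1},\omega_0 D_{n-1}\mathcal{M}_B)$, and conclude by Lemmas \ref{vertextheor} and \ref{post}. You simply carry out in detail (the decomposition $\R^n=V_\mu\oplus\langle\zeta\rangle$, the identification $V_0=V_\mu^{(1)}\oplus V_\mu^{(2)}$, and the $\Sigma\Delta=D_{n-1}$ sign check) what the paper compresses into a few sentences and one matrix identity.
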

\begin{proof}
The vector $\xi\in w^{\perp}$ has a representation
$\xi=\sum_{i=1}^n f_{2i-1}.$ The projection of $w^{\perp}$ to the subspace generated by $\{f_i\}$ except $f_n$ induces an isomorphism on $w^{\perp}/\langle\xi\rangle.$ This projection could be described by the action of the blocks of the matrix $(\check{M}_B,Id_n)$ on the basis on the left. This action after the sign change given by the matrix $\Delta$ is represented by the matrix
\bea
(\Delta\check{M}_B|_{V_\mu}\Delta,Id_{n-1})=(\mathcal{M}_B,Id_{n-1}).\nn
\eea
The following matrix multiplication
\bea
\omega_0 D_{n-1} \times(\mathcal{M}_B,Id_{n-1}) \times  \left(\begin{array}{cc} 0 & Id_{n-1} \\ \omega_0 D_{n-1} & 0\end{array} \right)=
(Id_{n-1},\omega_0 D_{n-1} \mathcal{M}_B)\nn
\eea
transforms the matrix $(\mathcal{M}_B,Id_{n-1})$ to the form $\psi(\mathcal{M}_B).$ Let us comment that the left multiplication does not change the point of the Grassmannian and the right multiplication is just the change of the basis in $V_0,$ it transforms the matrix of the symplectic form and the subspace remains Lagrangian.
\end{proof}

\section*{Acknowledgements} 
I am grateful to B. Bychkov, V. Gorbunov and A. Kazakov for numerous discussions and especially to B. Bychkov for a careful reading of the text.

The work was partially supported by the Basis foundation, the grant Leader (Math) 20-7-1-21-1.

\end{document}